%% file: adjunction.tex
\documentclass[a4paper,12pt]{article}
\usepackage[math]{kurier}
\usepackage[T1]{fontenc}
\usepackage{graphicx}
\usepackage{enumitem}
\usepackage[
  top=1.5cm,
  left=1.5cm,
  right=1.5cm,
  bottom=2cm
]{geometry}

\usepackage{mathtools}
\usepackage{amsmath}
\usepackage{amsfonts}
\usepackage{amsthm}
\usepackage{amssymb}
\usepackage{xspace}
\usepackage{mathpartir}
\usepackage{tikz}
\usetikzlibrary{calc}

\input{macros}

\author{Paul Brunet}
\title{A 2-adjunction between representations\\ and\\ preorder morphisms}

\newtheorem{lemma}{Lemma}
\begin{document}

\maketitle
\section{Introduction}
The recently introduced model of representations has been defined and motivated somewhat ex-nihilo. In this document, I will show that representations are related to a more ``classical'' model through a 2-adjunction. The target model is that of preorder morphisms, i.e. maps between sets equipped with reflexive and transitive relation that satisfy some natural preservation property.

The aim of this is two-fold: first, this provides in my opinion a further justification of representations, as an object in non-trivial yet tight connection to some natural constructs; and secondly it suggests some classical results\footnote{Which ones? Yet to be investigated... But they surely should exist!}  about order preserving maps could have interesting consequences for representations.

\section{Preliminaries}
\paragraph{Notations}{\itshape\small
A binary relation $x:A\relation B$ is a subset of $A\times B$. We denote the set-theoretic inclusion of relations by $\leqrel$, writing $x\leqrel y$ instead of $x\subseteq y$. Similarly, equality of relations is denoted by $\eqrel$.
Given sets $A,B,C$, and relations $x:A\relation B$ and $y:B\relation C$, we define the following constants and operations: $1_A:A\relation A$ is the identity (diagonal) relation over $A$, $x\converse:B\relation A$ is the converse of $x$, and $x\then y:A\relation C$ is the sequential composition of $x$ and $y$, containing all pairs $(a,c)$ such that there is some $b\in B$ such that $(a,b)\in x$ and $(b,c)\in y$.
Given $z:A\to C$, the (left-)residual of $z$ by $x$, written $x\under z:B\relation C$, is the set of pairs $(b,c)$ such that for every $a\in A$, $(a,b)\in x$ entails $(a,c)\in z$. This operator has can also be defined by the following universal equivalence:
\begin{equation}
y\leqrel x\under z~\Leftrightarrow~ x\then y\leqrel z.\label{eq:galois}                                                                 \end{equation}
A dual operator $z\over y$, satisfying $x\leqrel z\over y$ iff $x\then y\leqrel z$ can be defined by $z\over y\eqdef\paren((y\converse)\under(z\converse))\converse$.
Given a function $f:A\to B$, the relations $f_\ast:A\relation B$ and $f^\ast:B\relation A$ are defined as the graph of the function (i.e. pairs $(a,f(a))$) and its converse (pairs $(f(a),a)$).
An important property we will leverage is the following (universal tautology):
$$f_\ast\then (x\under y)\then g^\ast\eqrel (x\then f^\ast)\under (y\then g^\ast).$$

To define explicit functions, we borrow the $\lambda$ notation from the eponymous calculus. In practice, given an expression $e$ with a free variable $x$, $\lambda x.e$ is the function that maps any $x_0$ to $e\brack[x\mapsto x_0]$, the expression where $x$ has been substituted with $x_0$.

We will make use several times of the fact that whenever we have a pair of functions $f,g:A\to 2^B$ whose co-domain is a powerset, we get $f=g$ iff ${\in}\then f^\ast\eqrel{\in}\then g^\ast$. This is a consequence of the fact that any such function can be written $$ f = \lambda a.\setcompr{b\in B}{b\in f(a)}=\setcompr{b\in B}{(b,a)\in \paren({\in\then f^\ast})}.$$

}
%\upshape

\subsection{The category of preorder morphisms}
  A \emph{preorder} is a reflexive and transitive relation $x:A\relation A$, over some set~$A$.
  In relation algebra, this is captured by the following axioms:
  \begin{align}
  1_A&\leqrel x\tag{reflexivity}\\
  x\then x&\leqrel x\tag{transitivity}
  \end{align}
  Alternatively, the single axiom $x \eqrel x\under x$ defines the same notion.
  Since $(x\under x)\under(x\under x)\eqrel x\under x$ (and this for any $x:A\relation B$), this means preorders are exactly relations of the shape $x\under x$.
  Notice in particular that ${\in}\under{\in}\eqrel{\subseteq}$.

  A \emph{(preorder)-morphism} (\emph{prom} for short) is a structure $\pom=\tuple<A,B,x,y,f>$ where $x:A\relation A$ and $y:B\relation B$ are preorders, and $f:A\to B$ is an order-preserving function, i.e.
  \begin{equation}
   f^\ast\then x\leqrel y\then f^\ast.\tag{order preservation}
  \end{equation}

  A \emph{prom-morphism} between a pair of proms $\pom=\tuple<A,B,x,y,f>$ and $\pom'=\tuple<A',B',x',y',f'>$ is a pair of functions $\phi,\psi$ such that $\tuple<A,A',x,x',\phi>$ and $\tuple<B,B',y,y',\psi>$ are proms, and furthermore the following identity holds (or equivalently the accompanying diagram commutes):
  \begin{center}
   \begin{tikzpicture}
   \node(id) at (-8,1){$\psi^\ast\then f^\ast\eqrel f'^\ast\then\phi^\ast$};
    \node(a)at (0,2) {$A$};
    \node(b)at (0,0) {$B$};
    \node(a')at (2,2) {$A'$};
    \node(b')at (2,0) {$B'$};
    \draw[->] (a) to node[midway,left]{$f$}(b);
    \draw[->] (a') to node[midway,right]{$f'$}(b');
    \draw[->] (a) to node[midway,above]{$\phi$}(a');
    \draw[->] (b) to node[midway,above]{$\psi$}(b');
    \end{tikzpicture}
  \end{center}

  The category $\PoM$ is thus defined as having proms as objects, and prom-morphisms as homomorphisms.

\subsection{Representations}
A representation is a structure $R=\tuple<M,S,\models,\leq>$ where $\leq:S\relation S$ is a preorder, and $\models:M\relation S$ is a relation such that:
\begin{equation}
    {\models}\then{\leq}\leqrel\models
    \tag{soundness}
\end{equation}
A representation morphism between two representations $R=\tuple<M,S,\models,\leq>$ and $R'=\tuple<M',S',\models',\leq'>$ is a pair $\phi,\tau$ such that $\tuple<S,S',\leq,\leq',\phi>$ is a prom and $\tau:M'\relation M$ is a relation such that:
\begin{center}
  \begin{tikzpicture}
  \node(id) at (-8,1){$\tau\then {\models}\eqrel {\models'}\then\phi^\ast$};
  \node(a)at (0,2) {$S$};
  \node(b)at (0,0) {$M$};
  \node(a')at (2,2) {$S'$};
  \node(b')at (2,0) {$M'$};
%     \draw[thick,->] (a) to node[midway,left]{$\models$}(b);
%     \draw[thick,->] (a') to node[midway,right]{$\models$}(b');
%     \draw[thick,->] (a) to node[midway,above]{$\phi$}(a');
%     \draw[thick,->] (b) to node[midway,above]{$\tau$}(b');
  \relarrow(b)[midway,left][$\models$](a);
  \relarrow(b')[midway,right][$\models'$](a');
  \relarrow(a')[midway,above][$\phi^\ast$](a);
  \relarrow(b')[midway,above][$\tau$](b);
  \end{tikzpicture}
\end{center}

Given two representations $R$ and $R'$, the set of representation morphisms between them can be ordered by setting $\phi,\tau\leqslant\phi',\tau'$ iff $\phi=\phi'$ and $\tau\leqrel\tau'$.

Therefore we set $\Repr$ to be the 2-category, or in this case the order-enriched category, with representations as 0-cells (objects), representation morphisms as 1-cells, and $\leqslant$ as the collection of 2-cells (the order structure on the hom-sets).

\section{The functors}
\subsection{From proms to representations}
\[\R:\PoM\to\Repr\]

Let $\pom=\tuple<A,B,x,y,f>$ be a prom, we define $\R\pom$ to be the representation
\[\R\pom\eqdef\tuple<B,A,y\then f^\ast,x>.\]

Given a pair of proms $\pom,\pom'$ and a prom morphism $\phi,\psi$ between them, we define the following representation morphism between $\R\pom$ and $\R\pom'$:
\[\R(\phi,\psi)\eqdef\phi,\paren(y'\then\psi^\ast)\]

$\R$ is a 2-functor from $\PoM$ to $\Repr$, which we prove in the next three lemmas.

\begin{lemma}
 $\R\pom$ is a representation.
\end{lemma}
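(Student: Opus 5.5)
We have to check that $\R\pom=\tuple<B,A,y\then f^\ast,x>$ meets the definition of a representation. Matching components, the models are $M=B$, the states are $S=A$, the satisfaction relation is ${\models}\eqdef y\then f^\ast:B\relation A$, and the preorder is ${\leq}\eqdef x:A\relation A$. The typing is immediate: $y:B\relation B$ and $f^\ast:B\relation A$, so $y\then f^\ast:B\relation A$ as required. Also $x$ is a preorder by the assumption that $\pom$ is a prom. So the only axiom left to verify is soundness, namely
\[
(y\then f^\ast)\then x\leqrel y\then f^\ast .
\]

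The plan is a short chain of inclusions in relation algebra, using only associativity and monotonicity of $\then$. By associativity we rewrite the left-hand side as $y\then(f^\ast\then x)$. Order preservation of $f$ gives $f^\ast\then x\leqrel y\then f^\ast$. Composing on the left with $y$ (composition is monotone) then yields
\[
y\then(f^\ast\then x)\leqrel y\then y\then f^\ast .
\]
Finally, transitivity of the preorder $y$ gives $y\then y\leqrel y$, and composing on the right with $f^\ast$ gives $y\then y\then f^\ast\leqrel y\then f^\ast$. Chaining these inclusions establishes soundness.

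There is no real obstacle here. The one point to get right is the orientation of the components: the representation's model set is $B$ and its state set is $A$. This is what makes order preservation, written with $f^\ast$ on the left of $x$, line up exactly with the soundness axiom. Reflexivity of $y$ is not needed for this lemma.
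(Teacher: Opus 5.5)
Your proof is correct and follows the paper's argument: the paper's one-line chain $y\then f^\ast\then x\leqrel y\then y\then f^\ast\leqrel y\then f^\ast$ uses order preservation of $f$ and then transitivity of $y$, exactly as you do. Your checks that the typing works out and that $x$ is a preorder are routine points the paper leaves implicit.
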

\begin{proof}
  \[\models\then\leq\eqrel y\then f^\ast\then x\leqrel y\then y\then f^\ast\leqrel y\then f^\ast\eqrel \models.\]
\end{proof}
\begin{lemma}
 $\R(\phi,\psi)$ is a representation morphism.
\end{lemma}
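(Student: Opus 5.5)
The plan is to check the two clauses in the definition of a representation morphism separately. The first clause asks that $\tuple<A,A',x,x',\phi>$ be a prom. This is immediate, because $\phi,\psi$ is a prom-morphism and that definition already requires exactly this.

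For the second clause we first check the typing. In $\R\pom$ the models are $B$, the states are $A$, and ${\models}\eqrel y\then f^\ast$. In $\R\pom'$ we have ${\models'}\eqrel y'\then f'^\ast$. The candidate $\tau\eqdef y'\then\psi^\ast$ is a relation $B'\relation B$, as required. It remains to establish the identity
\[ y'\then\psi^\ast\then y\then f^\ast \eqrel y'\then f'^\ast\then\phi^\ast, \]
and I will prove it by two inclusions. Both use the commutation $\psi^\ast\then f^\ast\eqrel f'^\ast\then\phi^\ast$ from the prom-morphism, so that the right-hand side can be rewritten as $y'\then\psi^\ast\then f^\ast$.

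For $\geqslant$, reflexivity $1_B\leqrel y$ gives
\[ y'\then\psi^\ast\then f^\ast\leqrel y'\then\psi^\ast\then y\then f^\ast. \]
For $\leqrel$, the step that needs some care is moving $y$ past $\psi^\ast$. Order preservation of $\psi$ (the prom $\tuple<B,B',y,y',\psi>$) states $\psi^\ast\then y\leqrel y'\then\psi^\ast$. Substituting this gives
\[ y'\then\psi^\ast\then y\then f^\ast\leqrel y'\then y'\then\psi^\ast\then f^\ast. \]
Transitivity of $y'$ then collapses $y'\then y'$ to $y'$, and the commutation turns the result into $y'\then f'^\ast\then\phi^\ast$.

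All steps use only monotonicity of $\then$, so I expect no real obstacle. The only point to watch is using order preservation for $\psi$ rather than for $\phi$, together with the direction of the inequalities.
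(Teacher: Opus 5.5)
Your proof is correct and matches the paper's argument. The paper also rewrites ${\models'}\then\phi^\ast$ as $y'\then\psi^\ast\then f^\ast$ via the commutation square, and it uses the same three ingredients: reflexivity of $y$, order preservation of $\psi$ and transitivity of $y'$. The only differences are that the paper first isolates these as the identity $\tau\eqrel\tau\then y$ before post-composing with $f^\ast$, and that it leaves implicit the clause that $\tuple<A,A',x,x',\phi>$ is a prom, which you state explicitly.
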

\begin{proof}
  \[
    {\models'}\then\phi^\ast\eqrel y'\then f'^\ast\then \phi^\ast\eqrel y'\then\psi^\ast\then f^\ast\eqrel\tau\then f^\ast.
  \]
  We show that $\tau\eqrel\tau\then y$:
  \begin{itemize}
   \item by reflexivity of $y$ we have $\tau \leqrel \tau\then y$
   \item $\tau\then y\eqrel y'\then\psi^\ast\then y\leqrel y'\then y'\then\psi^\ast\leqrel y'\then\psi^\ast\eqrel\tau$.
  \end{itemize}
  Therefore ${\models'}\then\phi^\ast\eqrel\tau\then f^\ast\eqrel\tau\then y\then f^\ast\eqrel\tau\then{\models}$.
\end{proof}

\begin{lemma}
 $id_A,1_B\leqslant\R(id_A,id_B)$, and furthermore, given $\phi,\psi:\pom\to\pom'$ and $\phi',\psi':\pom'\to\pom''$, we have
 \[\R(\paren(\phi',\psi')\circ\paren(\phi,\psi))\leqslant\R(\phi',\psi')\circ\R(\phi,\psi).\]
\end{lemma}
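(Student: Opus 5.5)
Both claims are comparisons of 1-cells in $\Repr$. Since $\phi,\tau\leqslant\phi',\tau'$ means $\phi=\phi'$ and $\tau\leqrel\tau'$, each claim splits into an equality of the function components and an inclusion of the relation components. So the plan is to first check that the function parts coincide, which is immediate from the definitions, and then prove the relational inclusion by a short computation using only reflexivity and transitivity of the preorders and the order-preservation axiom of the proms involved.

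For the identity part, let $\pom=\tuple<A,B,x,y,f>$. By definition $\R(id_A,id_B)\eqdef id_A,\paren(y\then id_B^\ast)$. The function components agree, since the first component of $\R(id_A,id_B)$ is $id_A$. For the relational component, note that $id_B^\ast\eqrel 1_B$, so $y\then id_B^\ast\eqrel y$. By reflexivity of $y$ we get $1_B\leqrel y$, which is exactly the required inclusion.

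For composition, write $\pom'=\tuple<A',B',x',y',f'>$ and $\pom''=\tuple<A'',B'',x'',y'',f''>$. The composite prom morphism is $(\phi'\circ\phi,\psi'\circ\psi)$, so
\[\R(\paren(\phi',\psi')\circ\paren(\phi,\psi))\eqrel\phi'\circ\phi,\paren(y''\then(\psi'\circ\psi)^\ast).\]
On the other side, composition in $\Repr$ should compose the function parts as $\phi'\circ\phi$ and the relational parts contravariantly, as $\tau'\then\tau$. This is forced by the typing $\tau:M'\relation M$ and matches the commuting square. Hence
\[\R(\phi',\psi')\circ\R(\phi,\psi)\eqrel\phi'\circ\phi,\paren(y''\then\psi'^\ast\then y'\then\psi^\ast).\]
The function components agree. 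For the relations, use $(\psi'\circ\psi)^\ast\eqrel\psi'^\ast\then\psi^\ast$ together with reflexivity $1_{B'}\leqrel y'$:
\[y''\then\psi'^\ast\then\psi^\ast\leqrel y''\then\psi'^\ast\then y'\then\psi^\ast.\]
This is the required inclusion.

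The only step needing care is fixing the conventions for composition in $\Repr$ and for $(\psi'\circ\psi)^\ast$, which the paper has not stated explicitly. I would state the convention $(\phi',\tau')\circ(\phi,\tau)=(\phi'\circ\phi,\tau'\then\tau)$ and check that the order of relational composition is consistent with the types: $y''\then\psi'^\ast:B''\relation B'$ followed by $y'\then\psi^\ast:B'\relation B$. Once the conventions are fixed, the inclusion itself is a one-line use of reflexivity. It is also worth observing that the reverse inclusion holds as well: order preservation of $\psi'$ gives $\psi'^\ast\then y'\leqrel y''\then\psi'^\ast$, so $y''\then\psi'^\ast\then y'\then\psi^\ast\leqrel y''\then y''\then\psi'^\ast\then\psi^\ast\leqrel y''\then\psi'^\ast\then\psi^\ast$ by transitivity. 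The functor is therefore in fact strict, although the lemma only asks for $\leqslant$.
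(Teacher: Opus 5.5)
Your proposal is correct and is essentially the paper's proof. The identity case is reflexivity of $y$. The composition case inserts $y'$ between $\psi'^\ast$ and $\psi^\ast$ by reflexivity of $y'$, with composition in \Repr taken as $(\phi'\circ\phi,\tau'\then\tau)$, exactly as you fix it.

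Your extra remark that the reverse inclusion also holds, so that $\R$ is in fact strict, is correct as well. It is the same computation as the paper's $\tau\eqrel\tau\then y$ in the proof that $\R(\phi,\psi)$ is a representation morphism.
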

\begin{proof}
  \begin{align*}
   \R(id_A,id_B)
   &={id_A,y}\geqslant{id_A,1_B}\tag{reflexivity of $y$}\\
   \R({\phi',\psi'}\circ{\phi,\psi})
   &=\paren(\phi'\circ\phi,y''\then\psi'^\ast\then\psi^\ast)\\
   &\leqslant\paren(\phi'\circ\phi,y''\then\psi'^\ast\then y'\then\psi^\ast)\\
   &=\paren(\phi',y''\then\psi'^\ast)\circ\paren(\phi',y'\then\psi^\ast)\\
   &=\R(\phi',\psi')\circ\R(\phi,\psi).\tag*\qedhere
  \end{align*}
\end{proof}
\subsection{From representations to proms}
\[\M:\Repr\to\PoM\]
Let $R=\tuple<M,S,\models,\leq>$ be a representation. We define
\[\M*(R)=\tuple<S,2^M,\leq,\subseteq,\lambda s.\setcompr{m\in M}{m\models s}>. \]
Observe that ${\in} \then f_{\M*(R)}^\ast\eqrel\models$.

Now, let $R,R'$ be a pair of representations, and $\phi,\tau$ a morphism between them. We define:
\[\M(\phi,\tau)=\phi,\lambda \alpha.\setcompr{b\in B'}{\exists a\in\alpha:\;(b,a)\in\tau}.\]
In the following we write $\M*(\tau)$ for $\lambda \alpha.\setcompr{b\in B'}{\exists a\in\alpha:\;(b,a)\in\tau}$. Observe that ${\in}\then\M*(\tau)^\ast\eqrel\tau\then{\in}$.

$\M$ is a functor from the category $\Repr$ (forgetting about the 2-cell structure) to the category $\PoM$.
We prove this in the next three lemmas.
\begin{lemma}
 $\M*(R)$ is a prom.
\end{lemma}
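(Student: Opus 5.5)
We need to check the three ingredients of a prom for $\M*(R)=\tuple<S,2^M,\leq,\subseteq,f>$, where $f=\lambda s.\setcompr{m\in M}{m\models s}$. First, $\leq$ is a preorder by the definition of a representation. Second, $\subseteq$ on $2^M$ is a preorder. This follows from the preliminaries: ${\subseteq}\eqrel{\in}\under{\in}$, and every relation of the shape $x\under x$ is a preorder. Alternatively, reflexivity and transitivity of inclusion can be checked directly. So the only real content is order preservation,
\[f^\ast\then{\leq}\leqrel{\subseteq}\then f^\ast .\]

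For this, I use the residual characterisation of $\subseteq$. Writing ${\subseteq}\eqrel{\in}\under{\in}$, the stated tautology $f_\ast\then(x\under y)\then g^\ast\eqrel(x\then f^\ast)\under(y\then g^\ast)$ is not quite the shape needed. Instead I work with the Galois equivalence~\eqref{eq:galois}. The goal becomes a statement about pairs $(\alpha,s)$, and I would prove it pointwise.

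Take $(f(s_0),s)\in f^\ast\then{\leq}$, that is, $s_0\leq s$. I exhibit the witness $f(s)$. The pair $(f(s),s)$ is in $f^\ast$, so it suffices to show $f(s_0)\subseteq f(s)$. If $m\in f(s_0)$, then $m\models s_0$ and $s_0\leq s$, so by soundness ${\models}\then{\leq}\leqrel{\models}$ we get $m\models s$, i.e.\ $m\in f(s)$.

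In relational style, this is the inclusion $f^\ast\then{\leq}\leqrel{\subseteq}\then f^\ast$, which amounts to $f_\ast\then f^\ast\then\leq\then f_\ast\leqrel\subseteq$ together with totality of $f$. Using ${\in}\then f^\ast\eqrel{\models}$ and~\eqref{eq:galois}, the first inclusion reduces to ${\in}\then f^\ast\then{\leq}\then f_\ast\leqrel{\in}$, i.e.\ ${\models}\then{\leq}\then f_\ast\leqrel{\models}\then f_\ast\leqrel{\in}$. This is exactly soundness followed by ${\models}\then f_\ast\eqrel{\in}\then f^\ast\then f_\ast\leqrel{\in}$, which holds because $f$ is a function.

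The only mildly delicate step is this bookkeeping with $f^\ast$ versus $f_\ast$ in the point-free version. The pointwise argument above sidesteps it entirely, so I expect no real obstacle.
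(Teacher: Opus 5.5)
Your proof is correct, but it takes a more elementary route than the paper. The paper does not argue pointwise, and it does not separate totality from univalence of $f$. Instead it rewrites
\[{\subseteq}\then f^\ast\eqrel({\in}\under{\in})\then f^\ast\eqrel{\in}\under({\in}\then f^\ast)\eqrel{\in}\under{\models}.\]
This is exactly the stated tautology, instantiated with the identity as its first function. So, contrary to your remark, the tautology does have the needed shape once you apply it to ${\subseteq}\then f^\ast$ rather than to the whole goal. The Galois equivalence~\eqref{eq:galois} then turns $f^\ast\then{\leq}\leqrel{\in}\under{\models}$ into ${\in}\then f^\ast\then{\leq}\leqrel{\models}$, i.e.\ ${\models}\then{\leq}\leqrel{\models}$.

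What the paper's route buys:
\begin{itemize}
\item It shows in one line that order preservation of $f$ is \emph{equivalent} to soundness of $R$, with the functional bookkeeping hidden inside the tautology.
\item The same trick ${\subseteq}\then g^\ast\eqrel{\in}\under({\in}\then g^\ast)$ is reused later: for $\M(\phi,\tau)$, for $\Psi$, and when computing $\R\circ\M*(R)$ for the co-unit.
\end{itemize}

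What your route buys:
\begin{itemize}
\item Your pointwise monotonicity check is self-contained.
\item You also verify explicitly that $\leq$ and ${\subseteq}$ are preorders, which the paper leaves implicit.
\end{itemize}
Your point-free variant reaches the same place by conjugating with $f$ instead. It uses totality ($1_S\leqrel f_\ast\then f^\ast$) to reduce the goal to $f^\ast\then{\leq}\then f_\ast\leqrel{\subseteq}$, and univalence ($f^\ast\then f_\ast\leqrel 1_{2^M}$) at the end.

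There is one slip in that variant. The inclusion $f_\ast\then f^\ast\then{\leq}\then f_\ast\leqrel{\subseteq}$ is ill-typed: its left-hand side relates $S$ to $2^M$, while ${\subseteq}$ relates $2^M$ to itself. You mean $f^\ast\then{\leq}\then f_\ast\leqrel{\subseteq}$, which is the expression your subsequent Galois step actually uses. This does not affect the pointwise proof, which stands on its own.
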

\begin{proof}
  First, observe that $y_{\M*(R)}\then f_{\M*(R)}^\ast\eqrel {\subseteq}\then f_{\M*(R)}^\ast\eqrel{\in}\under{\in}\then f_{\M*(R)}^\ast\eqrel{\in}\under{\paren({\in}\then f_{\M*(R)}^\ast)}\eqrel{\in}\under\models$.
  Therefore we have that $ f_{\M*(R)}^\ast\then x_{\M*(R)}\leqrel y_{\M*(R)}\then f_{\M*(R)}^\ast$ iff
  ${\in}\then f_{\M*(R)}^\ast\then x_{\M*(R)}\leqrel\models$.
  This is obvious from the definitions:
  \[{\in}\then f_{\M*(R)}^\ast\then x_{\M*(R)}\eqrel{\models}\then{\leq}\leqrel\models\]
\end{proof}
\begin{lemma}
 $\M(\phi,\tau)$ is a prom morphism.
\end{lemma}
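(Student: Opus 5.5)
The plan is to check the three conditions defining a prom-morphism for the pair $\M(\phi,\tau)=\phi,\M*(\tau)$, going from $\M*(R)=\tuple<S,2^M,\leq,\subseteq,f>$ to $\M*(R')=\tuple<S',2^{M'},\leq',\subseteq,f'>$. Here $f$ and $f'$ abbreviate $f_{\M*(R)}$ and $f_{\M*(R')}$, and $\M*(\tau):2^M\to 2^{M'}$. Two facts recorded in the paper will do most of the work: ${\in}\then f^\ast\eqrel\models$ (and likewise for $R'$), and ${\in}\then\M*(\tau)^\ast\eqrel\tau\then{\in}$.

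The first condition is that $\tuple<S,S',\leq,\leq',\phi>$ is a prom. This is immediate, since it is part of the definition of a representation morphism $\phi,\tau$.

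The second condition is that $\tuple<2^M,2^{M'},\subseteq,\subseteq,\M*(\tau)>$ is a prom. Writing $\psi=\M*(\tau)$, we need $\psi^\ast\then{\subseteq}\leqrel{\subseteq}\then\psi^\ast$. I will argue as in the lemma showing that $\M*(R)$ is a prom, using ${\subseteq}\eqrel{\in}\under{\in}$. The stated tautology, applied with $f$ taken to be an identity, gives
\[({\in}\under{\in})\then\psi^\ast\eqrel{\in}\under({\in}\then\psi^\ast)\eqrel{\in}\under(\tau\then{\in}).\]
By the Galois equivalence (\ref{eq:galois}), the required inclusion is then equivalent to ${\in}\then\psi^\ast\then({\in}\under{\in})\leqrel\tau\then{\in}$. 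The left-hand side equals $\tau\then{\in}\then({\in}\under{\in})$, and ${\in}\then({\in}\under{\in})\leqrel{\in}$ is the counit of the residuation. So the inclusion holds.

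The third condition is commutation of the square, $\psi^\ast\then f^\ast\eqrel f'^\ast\then\phi^\ast$. Both sides are converses of graphs of functions $S\to 2^{M'}$, namely $\psi\circ f$ and $f'\circ\phi$. By the powerset principle from the preliminaries, it suffices to compare them after precomposing with ${\in}$. We have ${\in}\then\psi^\ast\then f^\ast\eqrel\tau\then{\in}\then f^\ast\eqrel\tau\then{\models}$, and ${\in}\then f'^\ast\then\phi^\ast\eqrel{\models'}\then\phi^\ast$. These agree exactly by the defining identity of a representation morphism.

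The main obstacle I expect is bookkeeping rather than mathematics. One has to keep the orientations straight: $\tau$ goes from $M'$ to $M$ while $\psi$ goes from $2^M$ to $2^{M'}$, and one must check that $(g\circ h)^\ast\eqrel g^\ast\then h^\ast$ is the right order. One must also apply the residual tautology with the correct instantiation. If that tautology proves awkward in this form, the fallback is to verify monotonicity of $\psi$ directly: it is the direct image under $\tau\converse$, which is clearly monotone for $\subseteq$.
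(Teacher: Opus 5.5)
Your proposal is correct and follows essentially the same route as the paper. Monotonicity of $\M*(\tau)$ is obtained by rewriting ${\subseteq}\then\M*(\tau)^\ast$ as ${\in}\under(\tau\then{\in})$ and residuating. Commutation of the square is checked after precomposing with ${\in}$, using ${\in}\then\M*(\tau)^\ast\eqrel\tau\then{\in}$, ${\in}\then f_{\M*(R)}^\ast\eqrel{\models}$ and $\tau\then{\models}\eqrel{\models'}\then\phi^\ast$. The only difference is that you explicitly note that $\phi$ is already a prom by definition of a representation morphism, which the paper leaves implicit.
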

\begin{proof}
First we check that $\tuple<2^M,2^{M'},\subseteq,\subseteq,\M*(\tau)>$ is a prom, i.e. $\M*(\tau)^\ast\then{\subseteq}\leqrel{\subseteq}\then\M*(\tau)^\ast$:
\begin{align*}
\text{Since }&&{\subseteq}\then\M*(\tau)^\ast&\eqrel\paren({\in}\under{\in})\then\M*(\tau)^\ast\eqrel{\in}\under\paren({\in}\then\M*(\tau)^\ast),\\
\text{we get:}&&
  \M*(\tau)^\ast\then{\subseteq}\leqrel{\subseteq}\then\M*(\tau)^\ast
  &\Leftrightarrow {\in}\then \M*(\tau)^\ast\then{\subseteq}\leqrel{\in}\then\M*(\tau)^\ast\\
\text{We conclude }&&
  {\in}\then \M*(\tau)^\ast\then{\subseteq}
  &\eqrel\tau\then{\in}\then{\subseteq}\leqrel\tau\then{\in}\eqrel{\in}\then\M*(\tau)^\ast.
\end{align*}

Now we check $\psi^\ast\then f^\ast\eqrel f'^\ast\then\phi^\ast$, i.e. in our case:
\[\M*(\tau)^\ast\then f_{\M R}^\ast\eqrel f_{\M R'}^\ast\then\phi^\ast.\]
Because its domain is $2^{M'}$, this identity is equivalent to the following:
\[{\in}\then \M*(\tau)^\ast\then f_{\M R}^\ast\eqrel {\in}\then f_{\M R'}^\ast\then\phi^\ast.\]
We establish it:
\begin{align*}
 {\in}\then \M*(\tau)^\ast\then f_{\M R}^\ast
 &\eqrel \tau\then{\in}\then  f_{\M R}^\ast\\
 &\eqrel \tau\then{\models}\\
 &\eqrel {\models'}\then\phi^\ast\\
 &\eqrel {\in}\then f_{\M R'}^\ast\then\phi^\ast.\tag*\qedhere
\end{align*}
\end{proof}
\begin{lemma}
 $\M(id_A,1_M)=id_A,id_{2^M}$ and for every pair of morphisms $\phi,\tau:R\to R'$ and $\phi',\tau':R'\to R''$:
 \[\M(\paren(\phi',\tau')\circ\paren(\phi,\tau))=\M(\phi',\tau')\circ\M(\phi,\tau).\]
\end{lemma}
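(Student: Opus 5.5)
Both equalities concern pairs of functions, and $\M$ leaves the first component untouched. So the first components are immediate: $\M(id_A,1_M)$ has first component $id_A$, and the composite's first component is $\phi'\circ\phi$ on both sides. The whole content is therefore in the second components. These are functions into powersets, so I will compare them with the characterisation from the preliminaries: two functions $f,g$ with codomain $2^B$ are equal iff ${\in}\then f^\ast\eqrel{\in}\then g^\ast$. The only computational tool needed is the observation ${\in}\then\M*(\tau)^\ast\eqrel\tau\then{\in}$ made right after the definition of $\M$.

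For identities, I will show $\M*(1_M)=id_{2^M}$. By the observation, ${\in}\then\M*(1_M)^\ast\eqrel 1_M\then{\in}\eqrel{\in}$. On the other side, $id_{2^M}^\ast\eqrel 1_{2^M}$, so ${\in}\then id_{2^M}^\ast\eqrel{\in}$ as well, and the criterion gives equality. (One can also see this directly: $\setcompr{b}{\exists a\in\alpha,\ b=a}=\alpha$.)

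For composition, I first need to fix how representation morphisms compose. Since $\tau:M'\relation M$ and $\tau':M''\relation M'$ are contravariant, the composite of $(\phi,\tau)$ followed by $(\phi',\tau')$ must be $(\phi'\circ\phi,\ \tau'\then\tau)$. This type-checks, as one sees by pasting the two squares: $\tau'\then\tau\then{\models}\eqrel\tau'\then{\models'}\then\phi^\ast\eqrel{\models''}\then\phi'^\ast\then\phi^\ast$. The second component to check is then $\M*(\tau'\then\tau)=\M*(\tau)\circ\M*(\tau')$, as maps $2^M\to 2^{M''}$. Two facts do the work:
\begin{itemize}
\item the converse graph of a composite of functions is $(g\circ h)^\ast\eqrel g^\ast\then h^\ast$;
\item the observation, applied twice.
\end{itemize}
Together they give
\[{\in}\then(\M*(\tau)\circ\M*(\tau'))^\ast\eqrel{\in}\then\M*(\tau')^\ast\then\M*(\tau)^\ast\eqrel\tau'\then{\in}\then\M*(\tau)^\ast\eqrel\tau'\then\tau\then{\in}\eqrel{\in}\then\M*(\tau'\then\tau)^\ast.\]
The powerset criterion then yields the equality of functions.

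The only real obstacle is bookkeeping. I need to get the order of composition in $\Repr$ right, given the contravariance of $\tau$, and the order of $g^\ast\then h^\ast$ for function graphs. Each is a one-line check on elements: $(c,a)\in(g\circ h)^\ast$ iff $c=g(h(a))$ iff $c\,g^\ast\,h(a)$ and $h(a)\,h^\ast\,a$. Everything else reduces mechanically to the observation ${\in}\then\M*(\tau)^\ast\eqrel\tau\then{\in}$.
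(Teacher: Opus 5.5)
This is essentially the paper's proof: the same powerset criterion, the same one-line computation for $\M*(1_M)$, and the same chain ${\in}\then\M*(\tau')^\ast\then\M*(\tau)^\ast\eqrel\tau'\then{\in}\then\M*(\tau)^\ast\eqrel\tau'\then\tau\then{\in}\eqrel{\in}\then\M(\tau'\then\tau)^\ast$, which the paper runs in the opposite direction. The one slip is exactly the bookkeeping you flagged: since $\M*(\tau):2^M\to 2^{M'}$ and $\M*(\tau'):2^{M'}\to 2^{M''}$, the composite to compare with $\M(\tau'\then\tau)$ is $\M*(\tau')\circ\M*(\tau)$, not $\M*(\tau)\circ\M*(\tau')$. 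The latter is ill-typed, and by your own rule its converse graph would be $\M*(\tau)^\ast\then\M*(\tau')^\ast$; your chain already computes ${\in}\then(\M*(\tau')\circ\M*(\tau))^\ast$, so only the label needs fixing.
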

\begin{proof}
The identity $\M(id_A,1_M)=id_A,id_{2^M}$ boils down to $\M*(1_M)=id_{2^M}$, i.e. ${\in}\then\M*(1_M)^\ast\eqrel{\in}\then id^\ast_{2^M}$:
 \begin{align*}
  {\in}\then\M*(1_M)^\ast&\eqrel 1_M\then{\in}\eqrel\in\eqrel{\in}\then id^\ast_{2^M}
 \end{align*}
 For the other one, $\M(\paren(\phi',\tau')\circ\paren(\phi,\tau))=\M(\phi',\tau')\circ\M(\phi,\tau)$ amounts to
 \[{\in}\then\M(\tau'\then\tau)^\ast\eqrel{\in}\then\M*(\tau')^\ast\then\M*(\tau)^\ast,\]
 which we now establish:
 \begin{align*}
  {\in}\then\M(\tau'\then\tau)^\ast
  &\eqrel\tau'\then\tau\then{\in}\\
  &\eqrel\tau'\then{\in}\then\M(\tau)^\ast\\
  &\eqrel{\in}\then\M*(\tau')^\ast\then\M*(\tau)^\ast.\tag*\qedhere
 \end{align*}
\end{proof}

\section{The adjunction}
\subsection{As a pair of natural transformations}

\subsubsection{The unit}
\[\eta:Id_\PoM\Rightarrow\M*(\R)\]
Let us first compute an explicit description of $\M\circ\R(\pom)$, with $\pom=\tuple<A,B,x,y,f>$:
\begin{align*}
\M\circ\R(\pom)=\M\tuple<B,A,y\then f^\ast,x>
&=\tuple<A,2^B,x,\subseteq,\lambda a.\setcompr{b\in B}{(b,a)\in y\then f^\ast}>\\
&=\tuple<A,2^B,x,\subseteq,\lambda a.\setcompr{b\in B}{(b,f(a))\in y}>
\end{align*}
In other symbols:
\[{\in}\then\phi^\ast_{\M\circ\R(\pom)}\eqrel y\then f^\ast\]

Now, we define $\eta:\pom\to\M\R(\pom)$ as the pair $id_A,\lambda b.\setcompr{b'\in B}{(b',b)\in y}$.

We check the two requirements for $\eta$ to be well-defined : 1) it should be a pom-morphism, and 2) it should be natural.

\begin{description}
 \item[$\eta$ is a prom-morphism]  three requirements:
 \begin{description}
  \item[$id_A$ is a prom] trivial.
  \item[$\lambda b.\setcompr{b'\in B}{(b',b)\in y}$ is a prom]
  \begin{align*}
  \paren(\lambda b.\setcompr{b'\in B}{(b',b)\in y})^\ast\then y
  &\leqrel {\subseteq} \then  \paren(\lambda b.\setcompr{b'\in B}{(b',b)\in y})^\ast\then y\\
  &\eqrel {\in}\under\paren({\in} \then  \paren(\lambda b.\setcompr{b'\in B}{(b',b)\in y})^\ast)\then y\\
  &\eqrel \paren({\in}\under y)\then y
  \leqrel {\in}\under y
  \tag{$\dagger$}
  \\
  &\eqrel {\in}\under\paren(\in\then\paren(\lambda b.\setcompr{b'\in B}{(b',b)\in y})^\ast)\\
  &\eqrel {\subseteq}\then\paren(\lambda b.\setcompr{b'\in B}{(b',b)\in y})^\ast.
\end{align*}
For the magical step $\dagger$, we observe that $\paren({\in}\under y)\then y
  \leqrel {\in}\under y$ iff:
  \begin{align*}
   {\in}\then \paren({\in}\under y)\then y
   &\leqrel y
  \end{align*}
  Which holds because ${\in}\then \paren({\in}\under y)\leqrel y$ and $y$ is transitive.
 \item[the diagram] because the co-domain of the diagram (its sink) is a powerset,
 $\psi^\ast\then f^\ast\eqrel f'^\ast\then\phi^\ast$ is equivalent to
  ${\in}\then\psi^\ast\then f^\ast\eqrel {\in}\then f'^\ast\then\phi^\ast$, meaning in our case:
  \begin{align*}
   {\in}\then\psi^\ast\then f^\ast
   &\eqrel{\in}\then\paren(\lambda b.\setcompr{b'\in B}{(b',b)\in y})^\ast\then f^\ast\\
   &\eqrel y\then f^\ast\eqrel {\in}\then f'^\ast\then 1_A\eqrel {\in}\then f'^\ast\then\phi^\ast.
  \end{align*}
 \end{description}
 \item[$\eta$ is a natural transformation] here we only need to check one diagram. Consider a prom-morphism $\phi,\psi$. We have to verify
 \[\eta\circ\paren(\phi,\psi) = \M*(\R(\phi,\psi))\circ\eta. \]
 To make sense of it, we compute $\M*(\R(\phi,\psi))$:
 \begin{align*}
 \M*(\R(\phi,\psi))
 =\M(\phi,y'\then\psi^\ast)
 &=\paren(\phi,\lambda \alpha.\setcompr{b'\in B'}{\exists b\in\alpha:\;(b',b)\in y'\then\psi^\ast})\\
 &=\paren(\phi,\lambda \alpha.\setcompr{b'\in B'}{\exists b\in\alpha:\;(b',\psi(b))\in y'}).
 \end{align*}
 Hence we have:
 \begin{align*}
  \eta\circ\paren(\phi,\psi)
  &=\paren(id_A,\lambda b.\setcompr{b'\in B'}{(b',b)\in y'})\circ\paren(\phi,\psi)\\
  &=\paren(\phi,\lambda b.\setcompr{b'\in B'}{(b',b)\in y'}\circ\psi)\\
  &=\paren(\phi,\lambda b.\setcompr{b'\in B'}{(b',\psi(b))\in y'})\\
  &=\paren(\phi,\lambda b.\setcompr{b'\in B'}{(b',b)\in y'\then\psi^\ast})\\
  &=\paren(\phi,\lambda b.\setcompr{b'\in B'}{(b',b)\in y'\then \psi^\ast\then y}))\tag{$\dagger$}\\
  &=\paren(\phi,\lambda b.\setcompr{b'\in B'}{\exists b''\in B:\;(b'',b)\in y \text{ and }(b',\psi(b''))\in y'}))\\
  &=\paren(\phi,\lambda b.\setcompr{b'\in B'}{\exists b''\in\setcompr{b''\in B}{(b'',b)\in y}:\;(b',\psi(b''))\in y'}))\\
  &=\paren(\phi,\lambda \alpha.\setcompr{b'\in B'}{\exists b''\in\alpha:\;(b',\psi(b''))\in y'})\circ\lambda b.\setcompr{b''\in B}{(b'',b)\in y})\\
  &=\paren(\phi,\lambda \alpha.\setcompr{b'\in B'}{\exists b''\in\alpha:\;(b',\psi(b''))\in y'})\circ\paren(id_A,\lambda b.\setcompr{b''\in B}{(b'',b)\in y})\\
  &=\M*(\R(\phi,\psi))\circ\eta.
 \end{align*}
For the magical step $\dagger$, we prove by double inclusion that $
 y'\then \psi^\ast\eqrel y'\then \psi^\ast\then y$
\begin{align*}
 y'\then \psi^\ast&\leqrel y'\then \psi^\ast\then y\tag{by reflexivity of $y$}
\end{align*}
For the other inclusion we use the fact that $\psi$ is a prom, and transitivity of $y'$:
\[y'\then \psi^\ast\then y\leqrel y'\then y'\then\psi^\ast\leqrel y'\then\psi^\ast.\]
\end{description}

\subsubsection{The co-unit}
\[\epsilon:\R*(\M)\Rightarrow Id_\Repr\]
Let $R=\tuple<M,S,\models,\leq>$ be a representation, consider $\R\circ\M*(R)$:
\begin{align*}
 \R\circ\M*(R) &= \R\tuple<S,2^M,\leq,\subseteq,\lambda s.\setcompr{m\in M}{m\models s}>\\
 &=\tuple<2^M,S,{\subseteq}\then{\paren(\lambda s.\setcompr{m\in M}{m\models s})^\ast},\leq>
\end{align*}
We observe the following:
\begin{align*}
 {\subseteq}\then{\paren(\lambda s.\setcompr{m\in M}{m\models s})^\ast}
&\eqrel \paren({\in}\under{\in})\then{\paren(\lambda s.\setcompr{m\in M}{m\models s})^\ast}\\
&\eqrel {\in}\under{\paren({\in}\then{\paren(\lambda s.\setcompr{m\in M}{m\models s})^\ast})}\\
 &\eqrel {\in}\under{\models}\\
\end{align*}
Therefore
\[\R\circ\M*(R) =\tuple<2^M,S,{\in}\under{\models},\leq>.\]

We now define $\epsilon:\R*(\M*(R))\to R$ as the pair $id_S,\in$.
As in the case of the unit, we need to check that $\epsilon$ is 1) a morphism and 2) a natural transformation.
\begin{description}
 \item [$\epsilon$ is a morphism] two requirements:
 \begin{description}
  \item [$id_S$ is a prom] trivial;
  \item [the diagram] $\in\then {\in}\under{\models}\eqrel{\models}\then id^\ast$.
  Note that since $\leq$ does not appear in this formula, it means there are no restrictions on $\models$. In other words the statement can be generalized as the following lemma:
  \begin{lemma}\label{lem:astuce}
   For any sets $A,B$ and $x:A\relation B$, $x\eqrel{\in}\then{\paren({\in}\under{x})}$.
  \end{lemma}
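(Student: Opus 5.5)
The plan is to prove the identity $x\eqrel{\in}\then\paren({\in}\under x)$ by double inclusion. Here ${\in}:A\relation 2^A$, so ${\in}\under x:2^A\relation B$ relates a subset $\alpha$ to $b$ exactly when every $a\in\alpha$ satisfies $(a,b)\in x$. One inclusion is a pure residuation fact. The other needs a witness, which I will build from $x$ itself by currying it into a powerset-valued function.

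\emph{The inclusion ${\in}\then\paren({\in}\under x)\leqrel x$.} I will instantiate the Galois equivalence~\eqref{eq:galois} with $y\coloneqq {\in}\under x$ and $z\coloneqq x$, the composed relation on the left being ${\in}$. The left-hand side ${\in}\under x\leqrel{\in}\under x$ holds trivially by reflexivity of $\leqrel$. The equivalence then gives ${\in}\then\paren({\in}\under x)\leqrel x$, which is the usual cancellation law of residuals. This already appeared implicitly in the ``magical step'' of the unit, where ${\in}\then({\in}\under y)\leqrel y$ was used.

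\emph{The inclusion $x\leqrel{\in}\then\paren({\in}\under x)$.} I will define the function $g:B\to 2^A$ by $g\eqdef\lambda b.\setcompr{a\in A}{(a,b)\in x}$. By the observation in the preliminaries about powerset-valued functions, ${\in}\then g^\ast\eqrel x$. Since ${\in}\then g^\ast\leqrel x$ holds in particular, the Galois equivalence~\eqref{eq:galois} yields $g^\ast\leqrel {\in}\under x$. Composing on the left with ${\in}$, which is monotone, gives
\[x\eqrel{\in}\then g^\ast\leqrel{\in}\then\paren({\in}\under x).\]
Pointwise, this says that for $(a,b)\in x$ the set $\alpha=g(b)$ contains $a$ and is included in the $x$-preimage of $b$.

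The only real step is this second inclusion, and specifically choosing the witness $g$; everything else is direct use of~\eqref{eq:galois}. Combining the two inclusions gives the lemma. In particular, with $x\coloneqq{\models}$ and $A\coloneqq M$, $B\coloneqq S$, it yields ${\in}\then({\in}\under{\models})\eqrel{\models}\eqrel{\models}\then id^\ast$, which is the diagram required for the co-unit $\epsilon$.
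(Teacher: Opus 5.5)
Your proof is correct and has the same skeleton as the paper's. The inclusion ${\in}\then\paren({\in}\under x)\leqrel x$ is the cancellation instance of~\eqref{eq:galois}. For the converse, one factors $x\eqrel{\in}\then w$, deduces $w\leqrel{\in}\under x$ from~\eqref{eq:galois}, and composes with ${\in}$.

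The difference lies in the choice of witness $w$. The paper takes $w\eqrel\eta_{\pset}^\ast\then x$, where $\eta_{\pset}=\lambda a.\set{a}$ is the unit of the powerset monad. Since ${\in}\then\eta_{\pset}^\ast\eqrel 1_A$, a pair $(a,b)\in x$ is witnessed by the singleton $\set{a}$. You take $w\eqrel g^\ast$ with $g=\lambda b.\setcompr{a\in A}{(a,b)\in x}$, so $(a,b)$ is witnessed by the full preimage $g(b)$.

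The paper's witness does not depend on $x$: a single fact about ${\in}$ and singletons handles every $x$. Yours is built from $x$, but it connects directly to the rest of the paper. For $x\eqrel{\models}$ your $g$ is exactly $f_{\M*(R)}$, and ${\in}\then g^\ast\eqrel{\models}$ is the observation made right after the definition of $\M$.

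With your $g$ one even has the exact description ${\in}\under x\eqrel{\subseteq}\then g^\ast$, which the co-unit section computes for $x\eqrel{\models}$. The lemma then follows at once, since ${\in}\then{\subseteq}\eqrel{\in}$.
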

  \begin{proof}
  The $\geqrel$ direction is obvious.
  For the other one, we use $\eta_{\pset}=\lambda a.\set{a}$, the unit of the powerset monad. It has the property that ${\in}\then\eta_{\pset}^\ast\eqrel 1_A$.
So, we have $x\eqrel{\in}\then\eta_{\pset}^\ast\then x$, thus ${\in}\then\eta_{\pset}^\ast\then x\leqrel x$ hence $\eta_{\pset}^\ast\then x\leqrel {\in}\under x$.
  This entails $x\eqrel{\in}\then\eta_{\pset}^\ast\then x\leqrel{\in}\then {\in}\under x$.
\end{proof}
%
%   Therefore we derive:
%   \begin{align*}
%    {\models}
%    &\eqrel 1_M\then{\models}\eqrel {\in}\then\eta^\ast_{\pset}\then{\models}.
%   \end{align*}
%   In particular this means ${\in}\then\eta^\ast_{\pset}\then{\models}\leqrel{\models}$, hence $\eta_{\pset}^\ast\then{\models}\leqrel{\in}\under{\models}$.
%   We may thus conclude:
%   \begin{align*}
%    {\models}
%    &\eqrel {\in}\then\eta^\ast_{\pset}\then{\models}
%     \leqrel {\in}\then{\in}\under{\models}.
%   \end{align*}
 \end{description}
\item[$\epsilon$ is natural]
  Let $\phi,\tau$ be a morphism from $R$ to $R'$. We need to check that:
  \[\epsilon\circ\R*(\M(\phi,\tau))=\paren(\phi,\tau)\circ\epsilon.\]
  Recall that $\R*(\M(\phi,\tau))=(\phi,{\subseteq}\then\M*(\tau)^\ast)$, and observe that ${\subseteq}\then\M*(\tau)^\ast\eqrel{\in}\under{\paren(\in\then\M*(\tau)^\ast)}\eqrel{\in}\under{\paren(\tau\then{\in})}$.
  As such, we get (using Lemma~\ref{lem:astuce})
  \[\epsilon\circ\R*(\M(\phi,\tau))=(\phi,{\in}\then{\in}\under{\paren(\tau\then{\in})})
  =\paren(\phi,\tau\then\in)
  =\paren(\phi,\tau)\circ\epsilon.\]

%   \begin{align*}
%
%   \end{align*}
\end{description}

\subsubsection{The diagrams}

There are two diagrams in the definition of an adjunction, one in each participating category.
Because \Repr is a 2-category, the corresponding diagram should be a 2-cell, i.e. an inclusion rather than an equality.
\begin{enumerate}
 \item $\epsilon\circ \R*(\eta) \geqslant id_{\R*(\pom)}$

\begin{align*}
 \epsilon\circ \R*(\eta)
 &= \paren(id_{A},\in)\circ\R(id_A,\lambda b.\setcompr{b'}{b\mathrel{y}b'})\\
 &= \paren(id_{A},\in)\circ\paren(id_A,{\subseteq}\then\lambda b.\setcompr{b'}{b\mathrel{y}b'}^\ast)\\
 &= \paren(id_{A},\in)\circ\paren(id_A,{\in}\under\paren({\in}\then\lambda b.\setcompr{b'}{b\mathrel{y}b'}^\ast))\\
 &= \paren(id_{A},\in)\circ\paren(id_A,{\in}\under y)\\
 &= \paren(id_{A}\circ id_A,{\in}\then{\in}\under y)\\
 &= \paren(id_A,y)\geqslant \paren(id_A,1_B)=id_{\R*(\pom)}.
\end{align*}

 \item $\M*(\epsilon)\circ \eta = id_{\M*(R)}$
 \begin{align*}
  \M*(\epsilon)\circ \eta
  &= \M(id_{S},\in)\circ\paren(id_S,\lambda X.\setcompr{Y}{X\subseteq Y})\\
  &= \paren(id_{S},\M*(\in))\circ\paren(id_S,\lambda X.\setcompr{Y}{X\subseteq Y})\\
  &= \paren(id_{S},\M*(\in)\circ\lambda X.\setcompr{Y}{X\subseteq Y})
 \end{align*}
 To conclude we need to show that $\M*(\in)\circ\lambda X.\setcompr{Y}{X\subseteq Y}=id_{2^M}$.
 Because these are functions with $2^M$ as codomain, their equality is equivalent to $[{\in}\then\paren(\M*(\in)\circ\lambda X.\setcompr{Y}{X\subseteq Y})^\ast\eqrel{\in}\then id_{2^M}^\ast$, which we now prove:
 \begin{align*}
  {\in}\then\paren(\M*(\in)\circ\lambda X.\setcompr{Y}{X\subseteq Y})^\ast
  &\eqrel{\in}\then\M*(\in)^\ast\then\paren(\lambda X.\setcompr{Y}{X\subseteq Y})^\ast\\
  &\eqrel{\in}\then{\in}\then\paren(\lambda X.\setcompr{Y}{X\subseteq Y})^\ast\\
  &\eqrel{\in}\then{\subseteq}\eqrel\in\eqrel{\in}\then id_{2^M}^\ast
 \end{align*}
\end{enumerate}

\subsection{The Galois connection}
Adjunctions can also be seen as correspondences between hom-sets, as described below.
Let $\pom$ be prom and $R$ a representation. We want to investigate the following connection:
\[ \mprset{fraction={===}}
  \inferrule{\R*(\pom)\to R}{\pom\to\M*(R)}\]

  We shall now define transformations $\Psi:\paren(\R*(\pom)\to R)\to\paren(\pom\to\M*(R))$ and
  $T:\paren(\pom\to\M*(R))\to\paren(\R*(\pom)\to R)$ between morphisms. Since the first components of both kinds of morphisms (ones between proms and between representations) are of the same kind, we define the first components of $\Psi(\phi,\tau)$ and $T(\phi,\psi)$ to be the function $\phi$ itself. As such we shall instead define $\Psi:(M\relation B)\to(B\to 2^M)$ and $T:(B\to 2^M)\to(M\relation B)$, and then set:
  \begin{align*}
   \Psi(\phi,\tau)&\eqdef (\phi,\Psi\tau)
   &T(\phi,\psi)& \eqdef(\phi,T\psi)
  \end{align*}
  To that effect:
  \begin{align*}
   \Psi:(M\relation B)&\to(B\to 2^M)
   &T:(B\to 2^M)&\to(M\relation B)\\
   \tau&\mapsto\lambda b.\setcompr{m\in M}{\exists b':\,(m,b')\in\tau\wedge(b',b)\in y}
   &\psi&\mapsto{\in}\then\psi^\ast.
   \end{align*}
   Observe that ${\in}\then\Psi\tau^\ast\eqrel\tau\then y$.
   \begin{lemma}
    If $(\phi,\psi)$ and $(\phi,\tau)$ are morphisms of the appropriate kind, so are $\Psi(\phi,\tau)$ and $T(\phi,\psi)$.
   \end{lemma}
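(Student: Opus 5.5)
Fix $\pom=\tuple<A,B,x,y,f>$ and $R=\tuple<M,S,\models,\leq>$. Then $\R*(\pom)=\tuple<B,A,y\then f^\ast,x>$ and $\M*(R)=\tuple<S,2^M,\leq,\subseteq,\lambda s.\setcompr{m\in M}{m\models s}>$, and I write $g$ for the last component of $\M*(R)$, so that ${\in}\then g^\ast\eqrel\models$. In both directions the first component $\phi:A\to S$ is unchanged, and the condition that $\tuple<A,S,x,\leq,\phi>$ is a prom is the same on both sides, so it transfers for free. What remains is the second component and the commuting square. My plan is to reduce every condition on a function into a powerset to a relational inclusion or equation by pre-composing with ${\in}$, as in the earlier lemmas. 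I will use the two observations ${\in}\then(\Psi\tau)^\ast\eqrel\tau\then y$ and ${\in}\then(T\psi)$-style identity $T\psi\eqrel{\in}\then\psi^\ast$, together with ${\subseteq}\eqrel{\in}\under{\in}$ and the residuation law (\ref{eq:galois}).

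\textbf{$T(\phi,\psi)$ is a representation morphism.} Suppose $(\phi,\psi):\pom\to\M*(R)$, with $\psi:B\to 2^M$ order preserving from $y$ to $\subseteq$ and $\psi^\ast\then g^\ast\eqrel f^\ast\then\phi^\ast$. I need ${\models}\then\phi^\ast\eqrel T\psi\then y\then f^\ast$. First I show $T\psi\then y\eqrel T\psi$. The inclusion $\geqrel$ is reflexivity of $y$. For $\leqrel$, order preservation gives
\[{\in}\then\psi^\ast\then y\leqrel{\in}\then{\subseteq}\then\psi^\ast\leqrel{\in}\then\psi^\ast.\]
Then
\[T\psi\then y\then f^\ast\eqrel{\in}\then\psi^\ast\then f^\ast\eqrel{\in}\then g^\ast\then\phi^\ast\eqrel{\models}\then\phi^\ast,\]
which is the required square.

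\textbf{$\Psi(\phi,\tau)$ is a prom morphism.} Suppose $(\phi,\tau):\R*(\pom)\to R$, with $\tau:M\relation B$ and $\tau\then y\then f^\ast\eqrel{\models}\then\phi^\ast$. Order preservation of $\Psi\tau$ means $(\Psi\tau)^\ast\then y\leqrel{\subseteq}\then(\Psi\tau)^\ast$. Since
\[{\subseteq}\then(\Psi\tau)^\ast\eqrel{\in}\under({\in}\then(\Psi\tau)^\ast)\eqrel{\in}\under(\tau\then y),\]
residuation reduces this to
\[{\in}\then(\Psi\tau)^\ast\then y\eqrel\tau\then y\then y\leqrel\tau\then y,\]
which is transitivity of $y$. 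For the square $(\Psi\tau)^\ast\then g^\ast\eqrel f^\ast\then\phi^\ast$, both sides have powerset domain $2^M$, so it suffices to compare them after pre-composing with ${\in}$:
\[{\in}\then(\Psi\tau)^\ast\then g^\ast\eqrel\tau\then y\then g^\ast,\qquad {\in}\then f^\ast\then\phi^\ast.\]

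\textbf{Expected obstacle.} This last step is where I expect trouble. The hypothesis gives information about $\tau\then y\then f^\ast$, not about $\tau\then y\then g^\ast$, and the domains do not match. In a prom morphism $\pom\to\M*(R)$ the square reads $\psi^\ast\then g^\ast\eqrel f^\ast\then\phi^\ast$, with $\psi^\ast:2^M\relation B$, and I would re-check the orientation conventions there. My first attempt is to read the condition as the identity of functions $A\to 2^M$, $\Psi\tau\circ f=g\circ\phi$. Pre-composing with ${\in}$ turns this into
\[{\in}\then(\Psi\tau)^\ast\then f^\ast\eqrel{\in}\then g^\ast\then\phi^\ast,\]
that is, $\tau\then y\then f^\ast\eqrel{\models}\then\phi^\ast$, which is exactly the hypothesis. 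If the paper's orientation of the square instead forces the literal form, I would fall back on matching domains via the functional-equality criterion. In either case the only genuine content is the equation ${\in}\then(\Psi\tau)^\ast\eqrel\tau\then y$.
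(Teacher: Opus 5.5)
Your proof is correct and is essentially the paper's own argument. It uses the same reductions, namely ${\in}\then(\Psi\tau)^\ast\eqrel\tau\then y$, ${\subseteq}\eqrel{\in}\under{\in}$ with residuation, and ${\in}\then\psi^\ast\then y\eqrel{\in}\then\psi^\ast$; for the last square it uses exactly your ``first attempt'' ${\in}\then(\Psi\tau)^\ast\then f^\ast\eqrel\tau\then y\then f^\ast\eqrel{\models}\then\phi^\ast\eqrel{\in}\then g^\ast\then\phi^\ast$.

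The anticipated obstacle comes only from a transcription slip, so no fallback is needed. The condition $\psi^\ast\then f^\ast\eqrel f'^\ast\then\phi^\ast$ instantiates to $(\Psi\tau)^\ast\then f^\ast\eqrel g^\ast\then\phi^\ast$, i.e.\ $\Psi\tau\circ f=g\circ\phi$. The forms $(\Psi\tau)^\ast\then g^\ast$ and $\psi^\ast\then g^\ast\eqrel f^\ast\then\phi^\ast$ that you wrote do not typecheck. Your $T$ computation already uses the correct form.
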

   \begin{proof}
    First, we check that $\Psi\tau$ (equipped with the appropriate orders) is a prom.
    \begin{align*}
    \text{Since }{\subseteq}\then\Psi\tau^\ast&\eqrel\paren({\in}\under{\in})\then\Psi\tau^\ast\eqrel{\in}\under\paren({\in}\then\Psi\tau^\ast)\eqrel{\in}\under\paren(\tau\then y)\\
      \Psi\tau^\ast\then y\leqrel\subseteq\then\Psi\tau^\ast
      &\Leftrightarrow
              {\in}\then\Psi\tau^\ast\then y\leqrel\tau\then y\\
    \text{which holds trivially since }
    {\in}\then\Psi\tau^\ast\then y&\eqrel\tau\then y\then y\eqrel \tau\then y.
    \end{align*}

    The fact that $(\phi,\psi)$ and $(\phi,\tau)$ are morphisms means that:
    \begin{align*}
     &{\models}\then\phi^\ast\eqrel\tau\then\models_{\R*(\pom)}\eqrel\tau\then y\then f^\ast
     &&{\in}\then\psi^\ast\then f^\ast\eqrel {\in}\then f_{\M*(R)}^\ast\then\phi^\ast\eqrel{\models}\then{\phi^\ast}.
    \end{align*}
    We observe the following:
    \begin{align*}
     {\in}\then\psi^\ast \then f^\ast&\leqrel{\in}\then\psi^\ast\then y \then f^\ast \\
     {\in}\then\psi^\ast \then y\then f^\ast
     &\leqrel{\in}\then{\subseteq}\then\psi^\ast\then f^\ast\\
     &\leqrel{\in}\then\psi^\ast \then f^\ast\\
  \text{i.e. }{\in}\then\psi^\ast \then f^\ast&\eqrel{\in}\then\psi^\ast\then y \then f^\ast.
    \end{align*}

    Finally we check the identities required for $\Psi(\phi,\tau)$ and $T(\phi,\psi)$ to be morphisms:
    \begin{align*}
     {\models}\then\phi^\ast
     &\eqrel {\in}\then\psi^\ast \then f^\ast\\
     &\eqrel {\in}\then\psi^\ast\then y \then f^\ast\\
     &\eqrel T\psi\then\models_{\R*(\pom)}\\
     {\in}\then\Psi\tau^\ast\then f^\ast&\eqrel \tau\then y\then f^\ast\\
     &\eqrel {\models}\then\phi^\ast\\
     &\eqrel {\in}\then f_{\M*(R)}\then\phi^\ast\tag*\qedhere
    \end{align*}
   \end{proof}
   \begin{lemma}
    $\Psi T(\phi,\psi) = (\phi,\psi)$ and $(\phi,\tau)\leqslant T\Psi(\phi,\tau)$.
   \end{lemma}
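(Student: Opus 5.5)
Since both $\Psi$ and $T$ leave the first component $\phi$ unchanged, the statement reduces to two facts about second components: $\Psi(T\psi)=\psi$ as functions $B\to 2^M$, and $\tau\leqrel T(\Psi\tau)$ as relations $M\relation B$.

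For the equality I will use the principle from the preliminaries: two functions into a powerset are equal iff their converses, precomposed with ${\in}$, agree. So it suffices to show ${\in}\then(\Psi(T\psi))^\ast\eqrel{\in}\then\psi^\ast$. The observation after the definition of $\Psi$ gives ${\in}\then(\Psi\tau)^\ast\eqrel\tau\then y$ for any $\tau$. Taking $\tau=T\psi={\in}\then\psi^\ast$, the left-hand side becomes ${\in}\then\psi^\ast\then y$.

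It then remains to show ${\in}\then\psi^\ast\then y\eqrel{\in}\then\psi^\ast$. The inclusion $\geqrel$ follows from reflexivity of $y$. For $\leqrel$, I use that $\psi$ is a prom from $(B,y)$ to $(2^M,\subseteq)$, i.e. $\psi^\ast\then y\leqrel{\subseteq}\then\psi^\ast$. Hence ${\in}\then\psi^\ast\then y\leqrel{\in}\then{\subseteq}\then\psi^\ast\leqrel{\in}\then\psi^\ast$, because ${\in}\then{\subseteq}\leqrel{\in}$. This is the same chain as in the proof of the previous lemma, so I can simply cite it. It is the only step that uses a hypothesis on $\psi$; note that this argument needs only that $\psi$ is order-preserving, not the commuting-square condition.

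For the inequality, the 2-cell order requires equal first components, which holds, and $\tau\leqrel T(\Psi\tau)$. Unfolding gives $T(\Psi\tau)={\in}\then(\Psi\tau)^\ast\eqrel\tau\then y$ by the same observation. Then $\tau\leqrel\tau\then y$ by reflexivity of $y$.

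The main obstacle is only the step ${\in}\then\psi^\ast\then y\leqrel{\in}\then\psi^\ast$, and that is dispatched by the order-preservation of $\psi$ together with ${\in}\then{\subseteq}\leqrel{\in}$. Everything else is direct unfolding of the definitions.
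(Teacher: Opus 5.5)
Your proposal is correct and follows the paper's proof essentially step for step. Like the paper, you compute ${\in}\then(\Psi(T\psi))^\ast\eqrel{\in}\then\psi^\ast\then y$ and collapse it to ${\in}\then\psi^\ast$ using reflexivity of $y$ together with order preservation of $\psi$ (via ${\in}\then{\subseteq}\leqrel{\in}$), and then obtain $\tau\leqrel\tau\then y\eqrel T(\Psi\tau)$ from reflexivity of $y$.
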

   \begin{proof}
    \begin{align*}
     {\in}\then\Psi T\psi^\ast
     &\eqrel T\psi\then y\eqrel {\in}\then\psi^\ast\then y
     \end{align*}
    Observe that
    \begin{align*}
     {\in}\then\psi^\ast\then y &\geqrel {\in}\then \psi^\ast\tag{reflexivity of $y$}\\
      {\in}\then\psi^\ast\then y
      &\leqrel{\in}\then{\subseteq}\then\psi^\ast\tag{$\psi$ is a prom}\\
      &\eqrel{\in}\then \psi^\ast.
     \end{align*}
    Therefore
    \begin{align*}
     {\in}\then\Psi T\psi^\ast
     &\eqrel{\in}\then\psi^\ast
     \end{align*}
    For the diagram in \Repr:
    \begin{align*}
      T\Psi\tau
      &\eqrel{\in}\then\Psi\tau^\ast\eqrel\tau\then y\geqrel \tau.\tag*\qedhere
    \end{align*}

   \end{proof}
   Again we observe that the diagram in \Repr does not hold as an identity, but as an inequation, i.e. a 2-cell.

   \section{What about exactness ?}
   A key notion in the theory of representations is that of \emph{exact representations}. These are intuitiveley representations   where the preorder captures exactly the satisfaction relation, in that the following should hold:
   \[{\models}\under{\models}\leqrel{\leq}\]
   Since soundness (i.e. ${\models}\then{\leq}\leqrel{\models}$) is equivalent to the converse entailment,
   for exact representations this defining inequation actually holds as an identity.

   We now show that exactness corresponds to the well known notion of \emph{order-reflecting morphisms}: a prom $\pom=\tuple<A,B,x,y,f>$ is order-reflecting if $f_\ast\then y\then f^\ast\leqrel x$, i.e. if for every $a,a'\in A$, we have $(f(a),f(a'))\in y\Rightarrow (a,a')\in x$. Note that the fact that $f$ is order-preserving ($f^\ast\then x\leqrel y\then f^\ast$) entails the converse implication, meaning an order-reflecting prom actually satisfies $x\eqrel f_\ast\then y\then f^\ast$.

   \begin{lemma}
    $R$ is exact iff $\M*(R)$ is order-reflecting.
   \end{lemma}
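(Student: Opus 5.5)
We will prove the statement by showing that both conditions reduce to the same relational inequation, namely ${\models}\under{\models}\leqrel{\leq}$. Write $f$ for $f_{\M*(R)}=\lambda s.\setcompr{m\in M}{m\models s}$. The definition of $\M*(R)$ gives $x_{\M*(R)}={\leq}$ and $y_{\M*(R)}={\subseteq}$. So order-reflection of $\M*(R)$ is the inequation
\[f_\ast\then{\subseteq}\then f^\ast\leqrel{\leq},\]
and it suffices to prove $f_\ast\then{\subseteq}\then f^\ast\eqrel{\models}\under{\models}$.

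We compute this composite with the same tools used throughout the paper. First, ${\subseteq}\eqrel{\in}\under{\in}$. Second, as in the proof that $\M*(R)$ is a prom,
\[{\subseteq}\then f^\ast\eqrel{\in}\under\paren({\in}\then f^\ast)\eqrel{\in}\under{\models}.\]
Third, we apply the tautology $f_\ast\then(x\under y)\then g^\ast\eqrel(x\then f^\ast)\under(y\then g^\ast)$, taking $g$ to be the identity so that $g^\ast\eqrel 1$. This yields
\[f_\ast\then\paren({\in}\under{\models})\eqrel\paren({\in}\then f^\ast)\under{\models}\eqrel{\models}\under{\models},\]
using once more the observation ${\in}\then f^\ast\eqrel{\models}$.

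Combining these three steps gives $f_\ast\then{\subseteq}\then f^\ast\eqrel{\models}\under{\models}$. Hence $\M*(R)$ is order-reflecting iff ${\models}\under{\models}\leqrel{\leq}$, which is exactly the definition of exactness of $R$.

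The only delicate point is the last step. There we need to check that the tautology applies with the identity function in place of $g$, and that the residual $\under$ may be pushed through $f_\ast$. If one prefers to avoid the tautology, this step can also be verified pointwise. A pair $(s,s')$ is in $f_\ast\then({\in}\under{\models})$ iff every $m\in f(s)$ satisfies $m\models s'$, that is, iff $m\models s$ implies $m\models s'$ for all $m$. This is precisely the condition $(s,s')\in{\models}\under{\models}$.
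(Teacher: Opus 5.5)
Your proof is correct and follows essentially the paper's route. The paper likewise computes $f_\ast\then{\subseteq}\then f^\ast \eqrel ({\in}\then f^\ast)\under({\in}\then f^\ast) \eqrel {\models}\under{\models}$ using the function/residual tautology, and then reads off exactness. You split that single application into two steps (first absorbing $f^\ast$ into the residual, then $f_\ast$ with $g$ the identity), which is harmless, and your pointwise check confirms the result.
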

   \begin{proof}
   Observe the following:
     \begin{align*}
      f_{\M*(R)\ast}\then y_{\M*(R)}\then f_{\M*(R)}^\ast
      &\eqrel
      \paren(\lambda s.\setcompr{m}{m\models s})_\ast
      \then{\subseteq}\then
      \paren(\lambda s.\setcompr{m}{m\models s})^\ast\\
      &\eqrel
      \paren({\in}\then\paren(\lambda s.\setcompr{m}{m\models s})^\ast)\under
      \paren({\in}\then\paren(\lambda s.\setcompr{m}{m\models s})^\ast)\\
      &\eqrel
      {\models}\under{\models}.
     \end{align*}
Therefore $R$ is exact means ${\models}\under{\models}\leqrel{\leq}$ which is thus equivalent to $f_{\M*(R)\ast}\then y_{\M*(R)}\then f_{\M*(R)}^\ast\leqrel{\leq}$ which is the definition of $\M*(R)$ being order-reflecting.
   \end{proof}

   \begin{lemma}
    $\pom$ is order-reflecting iff $\R(\pom)$ is exact.
   \end{lemma}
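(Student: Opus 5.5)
The plan is to compute the left-hand side of the exactness condition for $\R(\pom)$ explicitly and show that it coincides with $f_\ast\then y\then f^\ast$. Both properties then become the same inequation, exactly as in the previous lemma for $\M*(R)$. Let $\pom=\tuple<A,B,x,y,f>$. By definition $\R(\pom)=\tuple<B,A,y\then f^\ast,x>$, so ${\models}\eqrel y\then f^\ast$ and ${\leq}\eqrel x$. Exactness of $\R(\pom)$ therefore reads
\[(y\then f^\ast)\under(y\then f^\ast)\leqrel x,\]
while order-reflection of $\pom$ reads $f_\ast\then y\then f^\ast\leqrel x$.

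The key step is to rewrite the residual using the universal tautology from the preliminaries,
\[f_\ast\then (u\under v)\then g^\ast\eqrel (u\then f^\ast)\under (v\then g^\ast).\]
We instantiate it with $u=v=y$ and $g=f$, which gives
\[(y\then f^\ast)\under(y\then f^\ast)\eqrel f_\ast\then(y\under y)\then f^\ast.\]
Next we use the characterisation of preorders recalled in the preliminaries: a relation $y$ is a preorder iff $y\eqrel y\under y$. Since $y$ is a preorder, this yields
\[(y\then f^\ast)\under(y\then f^\ast)\eqrel f_\ast\then y\then f^\ast.\]

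Substituting this identity into the exactness inequation for $\R(\pom)$ turns it literally into $f_\ast\then y\then f^\ast\leqrel x$. That is the definition of $\pom$ being order-reflecting, so both directions of the equivalence follow at once.

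There is no real obstacle in this argument. The only point needing care is applying the tautology with the right orientation: the converse-graph $f^\ast$ sits on the right of $y$ inside both arguments of the residual, which matches the form of ${\models}\eqrel y\then f^\ast$. If one wanted to avoid relying on the tautology, it could be checked pointwise. A pair $(a,a')$ lies in $(y\then f^\ast)\under(y\then f^\ast)$ iff every $b$ with $b\mathrel{y}f(a)$ satisfies $b\mathrel{y}f(a')$. By reflexivity and transitivity of $y$, this holds iff $f(a)\mathrel{y}f(a')$.
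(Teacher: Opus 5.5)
Your proposal is correct and is essentially the paper's own proof: you rewrite $(y\then f^\ast)\under(y\then f^\ast)$ via the universal tautology as $f_\ast\then(y\under y)\then f^\ast$, then use $y\under y\eqrel y$ for the preorder $y$, so that exactness of $\R(\pom)$ becomes literally the order-reflection inequation $f_\ast\then y\then f^\ast\leqrel x$. Your pointwise check via reflexivity and transitivity of $y$ is an optional but correct verification of the same identity.
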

   \begin{proof}
    Here  we have:
    \begin{align*}
     {\models_{\R(\pom)}}\under{\models_{\R(\pom)}}
     &\eqrel \paren(y\then f^\ast)\under\paren(y\then f^\ast)\\
     &\eqrel f_\ast\then\paren(y\under y)\then f^\ast\\
     &\eqrel f_\ast\then y\then f^\ast.
    \end{align*}
    So again, the definition of order-preserving coincides with exactness.
   \end{proof}
\end{document}

%% file: macros.tex
\DeclareSymbolFont{stmry}{U}{stmry}{m}{n}
\DeclareMathDelimiter\llbracket{\mathopen}{stmry}{"4A}{stmry}{"71}
\DeclareMathDelimiter\rrbracket{\mathclose}{stmry}{"4B}{stmry}{"79}
\DeclareMathSymbol\leftrightarrowtriangle\mathbin{stmry}{"5D}
\DeclareMathSymbol\leftarrowtriangle\mathrel{stmry}{"5E}
\DeclareMathSymbol\rightarrowtriangle\mathrel{stmry}{"5F}
\DeclareMathSymbol\inplus\mathrel{stmry}{"41}
\DeclareMathSymbol\niplus\mathrel{stmry}{"42}
\DeclareMathSymbol\boxempty\mathbin{stmry}{"1F}

\DeclareFontEncoding{LS1}{}{}
\DeclareFontSubstitution{LS1}{stix}{m}{n}
\DeclareSymbolFont{sletters}       {LS1}{stix}     {m}{it}
\DeclareSymbolFont{symbols4}      {LS1}{stixbb}   {m}{it}
\DeclareMathSymbol{\Vdash}                    {\mathrel}{sletters}{"F6}
\DeclareMathSymbol{\vDash}                    {\mathrel}{sletters}{"F5}
\DeclareMathSymbol{\VDash}                    {\mathrel}{sletters}{"F8}
\DeclareMathSymbol{\vDdash}                   {\mathrel}{symbols4}{"B0}
\DeclareMathSymbol{\Dashv}                    {\mathrel}{symbols4}{"B2}
\DeclareMathSymbol{\DashV}                    {\mathrel}
{symbols4}{"B3}
\DeclareMathSymbol{\dashV}                    {\mathrel}{symbols4}{"B1}

\DeclareFontEncoding{FML}{}{}
\DeclareFontSubstitution{FML}{futm}{m}{it}\DeclareSymbolFont{letters}{FML}{futm}{m}{it}%
\DeclareMathSymbol{\varpartialdiff}{\mathord}{letters}{130}

\NewDocumentCommand\paren{r()}{\left(#1\right)}
\NewDocumentCommand\Paren{r()}{\left(\hspace{-2pt}\middle( #1\middle)\hspace{-2pt}\right)}
\RenewDocumentCommand\brack{r[]}{\left[#1\right]}
\newcommand\set[1]{\left\{#1\right\}}
\newcommand\setcompr[2]{\left\{#1~\middle|~#2\right\}}
\NewDocumentCommand\tuple{r<>}{\left\langle#1\right\rangle}
\NewDocumentCommand\Tuple{r<>}{\left\langle\hspace{-2pt}\middle\langle #1\middle\rangle\hspace{-2pt}\right\rangle}
\NewDocumentCommand\sem{r[]}{\left\llbracket {#1}\right\rrbracket}
\NewDocumentCommand\variables{r()}{\left\lfloor #1 \right\rfloor}
\NewDocumentCommand\size{r||}{\left|#1\right|}

\NewDocumentCommand\Functor{ m s d()}{%
    \ensuremath{\IfValueTF{#3}{\IfBooleanTF{#2}{{#1}{{#3}}}{{#1}{\paren(#3)}}}{#1}}\xspace%
}

\newcommand\f[1]{\Functor{\mathcal{#1}}}
\NewDocumentCommand\relarrow{ O{.08} r() O{midway, above} o r() }{
  \IfNoValueTF{#4}{
    \draw[->](#2) to (#5);
  }{
    \draw[->](#2) to node[#3] {#4} (#5);
  }
  \draw ($(#2)!.5!(#5)$) circle (#1)
}
\newcommand\relation[1][.5]{
  \mathrel{
    \tikz{
      \coordinate(s) at (0,0);
      \coordinate(t) at (#1,0);
      \relarrow(s)(t);
    }
  }
}

\NewDocumentCommand\implarr{r() r()}{
  \mathrel{
    \tikz{
      \draw[]($(#1.east) + (0,.03)$) to ($(#2.west) - (.03,-.03)$);
      \draw[]($(#1.east) - (0,.03)$) to ($(#2.west) - (.03,.03)$);
      \draw[]($(#2.west) + (-.06,.06)$) -- (#2.west) -- ($(#2.west) - (.06,.06)$);
    }
  }
}

\newcommand\converse{^{\circ}}
\newcommand\under{\mathbin{\backslash}}
\renewcommand\over{\mathbin{\slash}}
\newcommand\then{\mathbin{;}}

\newcommand\eqrel{~\leftrightarrowtriangle~}
\newcommand\leqrel{~\rightarrowtriangle~}
\newcommand\geqrel{~\leftarrowtriangle~}

\renewcommand{\models}{\vDash}

\newcommand\Repr{\ensuremath{\mathbf{Repr}}\xspace}
\newcommand\PoM{\ensuremath{\mathbf{ProM}}\xspace}
\newcommand\pset{\f{P}}

\newcommand\eqdef{\mathrel{{{\mathop:}{=}}}}

\NewDocumentCommand\univ{ s r() }{%
\IfBooleanTF{#1}{{#2}^{\forall}}{{\paren(#2)}^{\forall}}%
}
\renewcommand{\epsilon}{\varepsilon}
\renewcommand{\phi}{\varphi}

\newcommand\R{\f R}
\newcommand\M{\f M}

\NewDocumentCommand\var{r()}{{\mathrm{var}}\paren(#1)}
\newcommand{\pom}{\mathfrak{p}}